\setlist{leftmargin=3ex, itemsep=.6ex}
\let\oldc\c
\newcommand{\Hex}[1]{\hspace{#1ex}}
\newcommand{\Vex}[1]{\vspace{#1ex}}
\newcommand{\mysec}[1]{\Vex{-2.5}\section{#1}}
\newcommand{\mysubsec}[1]{\Vex{-1.5}\subsection{#1\Vex{-.5}}}
\newcommand{\mypar}[1]{\Vex{-2}\paragraph{\bf #1.~}}
\newenvironment{example}{\Vex{-0.5}\mypar{\textit{\Hex{2.5}Example}}}{\hfill$\blacksquare$}
\newenvironment{code}{\Vex{0}\begin{alltt}\small}{\end{alltt}\Vex{0}}
\newcommand\co[1]{\texttt{\small #1}} 
\newenvironment{acks}{\Vex{-0.5}\mypar{\textit{Acknowledgments}}}
\newcommand\TD{\textsc{Top-down}\xspace}
\newcommand\BU{\textsc{Bottom-up}\xspace}
\newcommand\EBU{extended \BU} 
\renewcommand\rm[1]{\textrm{#1}}
\newcommand\defn[1]{\emph{#1}}
\newcommand\com[1]{}
\newcommand\m[1]{\mbox{$#1$}} 
\newcommand\IF{\m{\leftarrow}\xspace}
\newcommand\sinO[1]{\m{O(#1)}} 
\newcommand\sincO[1]{\sinO{\co{#1}}}  
\newcommand\sincsO[1]{\sincO{\##1}} 
\newcommand\doubcsO[2]{\sincO{\m{\co{\##1}\times\co{\##2}}}} 
\newcommand\ofour[4]{\m{O(\co{min}(\co{\##1}\times\co{\##2},\ \co{\##3}\times\co{\##4}))}}
\newtheorem{theorem}{{\bf Theorem}}
\newtheorem{lemma}{{\bf Lemma}}
\newtheorem{corollary}{{\bf Corollary}}
\newenvironment{proof}{\Vex{-.25}{\bf \textit{\Hex{2.5}Proof.~}}}{\hfill$\square$\Vex{1}}
\newenvironment{proofsketch}{\Vex{-.25}{\bf \textit{\Hex{2.5}Proof Sketch.~}}}{\hfill$\square$\Vex{1}}
\title{Extended Magic for Negation:\\Efficient Demand-Driven Evaluation of Stratified Datalog with Precise Complexity Guarantees}
\author{K. Tuncay Tekle \qquad Yanhong A. Liu
\institute{Department of Computer Science, Stony Brook University}
\email{tuncay,liu@cs.stonybrook.edu}
}
\begin{document}

\maketitle

\begin{abstract}
Given a set of Datalog rules, facts, and a query, answers
to the query can be inferred bottom-up starting from the
facts or top-down starting from the query. 
For efficiency, top-down evaluation is extended with memoization of inferred facts,
and bottom-up evaluation is performed
after transformations to make rules driven by the demand from the query.
Prior work has shown their precise complexity analysis and relationships.
However, when Datalog is extended with even stratified negation, 
which has a simple and universally accepted semantics,
transformations to make rules demand-driven may result in non-stratified negation, 
which has had many complex semantics and evaluation methods.

This paper presents (1) a simple extension to demand transformation,
a transformation to make rules demand-driven for Datalog without negation, to support stratified negation,
and (2) a simple extension to an optimal bottom-up evaluation method for Datalog with stratified negation, to handle non-stratified negation in the resulting rules.
We show that the method provides precise complexity guarantees.  It is also optimal in that only facts needed for top-down evaluation of the query are inferred and each firing of a rule to infer such a fact takes worst-case constant time.
We extend the precise relationship between top-down evaluation and demand-driven bottom-up evaluation to Datalog with stratified negation. Finally, we show experimental results for performance, as well as applications to previously challenging examples.
\end{abstract}

\maketitle

\mysec{Introduction}

Datalog~\cite{Maier:2018:DCH:3191315.3191317} is a logic language for deductive databases~\cite{DBLP:books/aw/AbiteboulHV95}, security~\cite{DBLP:conf/sp/DeTreville02}, networking~\cite{DBLP:journals/cacm/LooCGGHMRRS09}, semantic web~\cite{DBLP:journals/ws/CaliGL12}, and many other applications~\cite{WarLiu17AppLP-arxiv}.

Given a set of Datalog rules, facts, and a query, answers to the query can be inferred using bottom-up evaluation starting with the facts or top-down evaluation starting with the query. The dominant strategies for efficient evaluations are top-down evaluation with \defn{variant tabling}~\cite{DBLP:conf/iclp/TamakiS86} to memoize and reuse answers to subqueries, and bottom-up evaluation with the \defn{magic set transformation} (MST)~\cite{DBLP:conf/pods/BancilhonMSU86} so that evaluation of the transformed rules are driven by demand from the query. 

The performance of Datalog engines is difficult to understand due to the multitude of factors involved. For example, the performance of different tabling strategies vary drastically~\cite{DBLP:conf/iclp/RaoRR96}, and bottom-up evaluation after MST may be much slower than the bottom-up evaluation of the original rules~\cite{DBLP:conf/sigmod/SereniAM08}. Choosing the best evaluation method for a given set of rules requires precise complexity analysis of each evaluation method. Recently, \defn{demand transformation} (DT), resulting in simpler rules with better space complexity than MST has been introduced, and the time and space complexities of top-down evaluation with tabling and bottom-up evaluation after DT has been precisely established.

Stratified negation is a simple kind of negation with a simple and universally accepted semantics, and it is handled gracefully by top-down evaluation with tabling requiring almost no change. 
However, MST applied to rules with stratified negation is known to result in rules with non-stratified negation, which has many complex semantics, such as well-founded semantics~\cite{DBLP:journals/jacm/GelderRS91} and stable model semantics~\cite{DBLP:conf/iclp/GelfondL88}. Under such semantics, precise complexity analysis is difficult, and worst-case behavior is prohibitive, as in the quadratic time complexity in the size of the grounded rules for well-founded semantics.
A more recent semantics, founded semantics~\cite{DBLP:conf/lfcs/LiuS18}, can evaluate a certain class of rules with arbitrary negation in linear time, but not for all rules.
Other efforts to address this problem have been made, as discussed in Section~\ref{sec:related}, but they are more complicated than even MST, and none of them provide precise complexity guarantees, or proofs that they match top-down evaluation with tabling.

This paper presents a new method consisting of (1) a simple extension to demand transformation, a transformation to make rules demand-driven for Datalog without negation, to support stratified negation, and (2) a simple extension to an optimal bottom-up evaluation method for Datalog with stratified negation, to handle non-stratified negation in the resulting rules.  We show that the method provides precise complexity guarantees, which is worst-case linear time in the size of the grounded rules, in contrast to existing quadratic time results. The method only infers facts needed for top-down evaluation of the query, and each firing of a rule to infer such a fact takes worst-case constant time.  We extend the precise relationship between top-down evaluation and demand-driven bottom-up evaluation to Datalog with stratified negation.  Finally, we show experimental results for performance, as well as applications to previously challenging examples. 

\mysec{Datalog with negation} 

\defn{Datalog}~\cite{Maier:2018:DCH:3191315.3191317} is a language for defining rules, facts, and queries, where rules can be used with facts to answer queries.  

A Datalog rule is of the form:
\[
p(a_1,\rm{...},a_k)~ \IF~
p_1(a_{11},\rm{...},a_{1k_1}),\rm{...},\  
p_h(a_{h1},\rm{...},a_{hk_h}).
\]
where $h$ is a finite natural number, each $p_i$ (respectively $p$) is a predicate of finite number $k_i$ (respectively $k$) arguments, each $a_{ij}$ and $a_i$ is either a constant or a variable, and each variable in the arguments of $p$ must also be in the arguments of some $p_i$.

If $h=0$, then each $a_i$ must be a constant, in which case $p(a_1,\rm{...},a_k)$ is called a \defn{fact}.

For the rest of the paper, \defn{rule} refers only to the case where $h\geq 1$, in which case each $p_i(a_{i1},\rm{...},a_{ik_i})$
is called a \defn{hypothesis}, also known as \defn{premise}, and $p(a_1,\rm{...},a_k)$ is called the \defn{conclusion}.

The meaning of a set of rules and facts is the set of facts that are given or can be inferred using the rules.

A \defn{query} is of the form \m{p(a_1,\rm{...},a_k)?} where each argument \m{a_i} is either a constant or a variable. Answers to a query is the set of facts, given or inferred, that match the query, where a fact $f$ \defn{matches} a query $q$ if there is a substitution of the variables in $q$ to constants such that $q$ under the substitution is the same as $f$.

We call a predicate that appears in the conclusion of a rule an \defn{intensional} predicate, and a predicate that does not appear in the conclusion of any rule but appears only in given facts an \defn{extensional} predicate. We assume without loss of generality that intensional and extensional predicates are distinct.

In examples, we use letters for variables in arguments of predicates, and we use literal numbers for constants.

\begin{example} 
We use the following rules for transitive closure as a running example.

\begin{code}
p(x,y) \IF e(x,y).          \rm{(R1)}
p(x,z) \IF e(x,y), p(y,z).  \rm{(R2)}
\end{code}

We consider the query \co{p(1,x)?}.  It asks for all nodes reachable from \co{1} following the edges.
\end{example}

\mypar{Stratified negation}

Datalog can be extended with \defn{stratified negation} by extending rules to be of the form, where [\co{not}] indicates that \co{not} is optional:
\[
p(a_1,\rm{...},a_k)~ \IF~
[\co{not}] p_1(a_{11},\rm{...},a_{1k_1}),\rm{...},\  
[\co{not}] p_h(a_{h1},\rm{...},a_{hk_h}).
\]
and imposing the constraint that there is no cyclic dependency between any predicate and a negated predicate, i.e., a predicate preceded with a \co{not}. For example, replacing (R2) in the running example with the following rule, where \co{q} is an extensional predicate, results in Datalog with stratified negation:

\begin{code}
p(x,z) \IF e(x,y), p(y,z), \co{not} q(x,z).  \rm{(R2'S)}
\end{code}

\noindent because there is no cyclic dependency between \co{p} and \co{q}.

Formally, we say that predicate $p_1$ (negatively) depends on predicate $p_2$ if there is a rule whose conclusion's predicate is $p_1$ and there is a (negative) hypothesis in the rule whose predicate is $p_2$. A set of rules is \defn{stratified} if for any two predicates $p_1$ and $p_2$, if $p_1$ negatively depends on $p_2$, then $p_2$ does not transitively depend on $p_1$. Predicates in a stratified set of rules can be split into numbered \defn{strata} such that for any two strata $s_1$ and $s_2$, if a predicate in $s_1$ negatively depends on a predicate in $s_2$, then $s_1$ has a higher number than $s_2$.

Datalog with stratified negation has a simple and universally accepted semantics, where rules having predicates in a lower strata are evaluated first, and negated hypotheses under a substitution is considered false if the corresponding facts have not been inferred. 
This semantics coincides with all well-known semantics for negation, including perfect model semantics~\cite{DBLP:journals/jar/Przymusinski89}, well-founded semantics~\cite{DBLP:journals/jacm/GelderRS91}, and stable model semantics~\cite{DBLP:conf/iclp/GelfondL88}. 

\mypar{Non-stratified negation}

Removing the constraint that there is no cyclic dependency between any predicate and a negated predicate may result in non-stratified negation, which has no universally accepted semantics. Consider the following rule:

\begin{code}
t(x) \IF \co{not} t(x).
\end{code}

\noindent For a constant \co{c}, if \co{t(c)} is false, then it is also true by this rule, therefore standard semantics does not apply.
Many sophisticated semantics have been developed, including well-founded semantics~\cite{DBLP:journals/jacm/GelderRS91} and stable model semantics~\cite{DBLP:conf/iclp/GelfondL88}.  
Also, under sophisticated semantics, precise complexity analysis is difficult, and worst-case behavior is prohibitive, e.g., quadratic time in the size of the grounded rules for well-founded semantics~\cite{berman1995computing}.
Therefore, rules with non-stratified negation must be avoided whenever possible.

\mysec{Demand-driven evaluation and challenge of stratified negation}

This section describes top-down and bottom-up methods for answering queries,
and challenge of stratified negation.

\mypar{Top-down evaluation with tabling} 

To answer a query, top-down evaluation starts with the query, generates subqueries from hypotheses of rules whose conclusions match the query, considering rules in the order given, and considering hypotheses from left to right, and does so repeatedly until the subqueries match given facts.

Straightforward top-down evaluation may lead to repeated subqueries, or even infinite recursion for recursive rules such as (R2) in the running example. To address this problem, \defn{tabling} memoizes answers to queries encountered, and reuses the answers when a query is encountered again.

In this paper, we consider top-down evaluation using the dominant tabling strategy, variant tabling without early completion, which exploits all ways to infer the answers to a query modulo variable renaming~\cite{DBLP:journals/jacm/ChenW96}.  We refer to this evaluation as \TD in the rest of the paper. 

Stratified negation is generally supported
for queries that are \defn{non-floundering}~\cite{DBLP:journals/jacm/ChenW96}.  A query is non-floundering with respect to a set of rules if during \TD of the query, all subqueries for negated hypotheses have all of their arguments bound.

In this paper, we only consider non-floundering queries of Datalog with stratified negation, because these queries can be evaluated by \TD with a trivial extension to test the negation.

\mypar{Optimal bottom-up evaluation and demand-driven transformations}

Bottom-up evaluation starts with given facts, infers new facts from conclusions of rules whose hypotheses match existing facts, and does so repeatedly until no more facts can be inferred.

An optimal method~\cite{DBLP:journals/toplas/LiuS09} transforms any given set of Datalog rules into an efficient specialized procedural implementation with guaranteed worst-case time and space
complexities, and computes the complexities from the rules.

In particular, rules are first transformed to remove singleton variables (variables that occur in only one hypothesis) and multiple occurrences of the same variable in a single hypothesis. Next, rules with more than two hypotheses are decomposed into rules of two hypotheses.
Then, the least fixed-point specification of the rules is transformed to a while-loop that considers given and inferred facts incrementally, one at a time, where auxiliary indices are used to find each matching fact in constant time.

The evaluation is optimal in that only combinations of facts that make the hypotheses of a rule simultaneously true are considered, and each such combination, which leads to a \defn{firing} of the rule, is considered once in constant time. The time complexity is precisely the sum of the number of firings over all rules plus the size of extensional predicates for reading given facts.

In this paper, we only consider the decomposition of rules that takes the two leftmost hypotheses into a new rule at a time.
We call this \defn{left-optimal bottom-up evaluation}, because the time complexity of evaluation using this decomposition is optimal for the left-to-right ordering of the hypotheses in a rule. We refer to this evaluation as \BU in the rest of the paper.

Note that the optimal bottom-up evaluation in~\cite{DBLP:journals/toplas/LiuS09} is not limited to the particular decomposition used by \BU.  It can perform time and space complexity calculation for all possible decompositions, and select an optimal one trading space for time, including space for intermediate predicates from the decompositions.  We consider \BU in this paper in order to establish correspondence with the left-to-right evaluation of hypotheses in \TD. 

\BU infers all facts possible from the given facts, and thus may infer many facts not needed for answering the query.  For efficient evaluation, the query can be used to limit the facts that can be inferred. This is achieved by transforming the rules to be restricted by the query, like in the well-known \defn{magic set transformation (MST)}~\cite{DBLP:conf/pods/BancilhonMSU86}. 

\defn{Demand transformation (DT)}~\cite{DBLP:conf/ppdp/TekleL10} is such a transformation that results in rules with the same time complexity and exponentially better space complexity in program size
than MST.  It transforms a set of rules and a query into a new set of rules and a fact, 
which can infer only facts that can be inferred during \TD of the original rules. DT and MST make use of predicate annotations for argument binding patterns. DT differs from MST in that given intensional predicates 
are not annotated in transformed rules, therefore is much simpler, and much less space is used because a fact of a predicate is stored only once, rather than once for each possible annotation of the predicate.

DT has two stages: (I) compute demand patterns for intensional predicates, 
that is, a set of pairs $\langle\co{p},\co{s}\rangle$ 
indicating that under \TD for the given query, there is a subquery for predicate \co{p} with pattern \co{s}, a string of characters for arguments of the subquery, where the $i$th character is \co{b} if the $i$th argument is a constant (i.e., \defn{bound}), 
and \co{f} if it is a variable (i.e., \defn{free}), and (II) transform given rules and query 
using the demand patterns computed, in three steps, as follows.
\newcommand\DTbox{
\Vex{-0.25}
\begin{mdframed} 
\begin{enumerate}

\item[1.] For each demand pattern $\langle \co{p}, \co{s}\rangle$, and each given rule \co{p(\textrm{}args) \IF~h$_1$,\textrm{...},h$_n$.}, 
generate following rule

\quad\quad \co{p(\textrm{}args) \IF~d\_p\_s(a$_1$,\textrm{...},a$_k$), h$_1$,\textrm{...},h$_n$.}

where \co{args} denotes arguments in the conclusion, \co{a$_1$},\textrm{...},\co{a$_k$} are arguments in \co{args} 
that are bound by $\co{s}$, i.e., that correspond to character \co{b} in $\co{s}$.

\item[2.] For the given query $q$ of the form \co{p(args)?}, generate the following fact:

\quad\quad \co{d\_p\_s(a$_{b1}$,\textrm{...},a$_{bl}$).}

where \co{s} is the pattern for $q$, and \co{a$_{b1}$},\textrm{...},\co{a$_{bl}$} are the constant arguments in \co{args}. 

\item[3.] For each rule generated in Step 1, \co{c \IF~h$_0$,\textrm{...},h$_n$.}, and each \co{h$_i$} whose predicate is an intensional predicate \co{p}, generate the following rule:

\quad\quad \co{d\_p\_s(a$_1$,\textrm{...},a$_k$) \IF~h$_0$,\textrm{...},h$_{i-1}$.}

where \co{s} is the pattern of \co{h$_i$}, and \co{a$_1$},\textrm{...},\co{a$_k$} are the bound arguments of \co{h$_i$}.

\end{enumerate}
\end{mdframed}
\Vex{-1}
}

\DTbox
\normalsize

\begin{example}
For the running example, the only demand pattern is $\langle\co{p},\co{bf}\rangle$, because the only type of query encountered during \TD is one where the first argument of \co{p} is a constant. DT yields the following rules and fact:

\begin{code}
p(x,y) \IF d_p_bf(x), e(x,y).          \rm{(R1')}   (DT Step 1)
p(x,z) \IF d_p_bf(x), e(x,y), p(y,z).  \rm{(R2')}   (DT Step 1)
d_p_bf(1).                            \rm{(DF)}   \,(DT Step 2)
d_p_bf(y) \IF d_p_bf(x), e(x,y).       \rm{(DR)}   \,(DT Step 3)
\end{code}
(R1') and (R2') restrict the original (R1) and (R2) using a new demand predicate \co{d\_p\_bf} indicating demand on the first of two arguments of predicate \co{p}. (DF) is a new fact that corresponds to demand by the given query where ther first of the two arguments of \co{p} is \co{1}. (DR) propagates demand in (R2') from \co{d\_p\_bf(x)} via \co{e(x,y)} to demand on the first argument \co{y} in \co{p(y,z)}.
\end{example}

\mypar{Challenge of stratified negation for  
MST and DT}

Not only can \TD handle non-floundering queries of stratified Datalog with a trivial extension~\cite{DBLP:journals/jacm/ChenW96}, which checks whether a negative hypothesis is true under the current substitution; 
but also can \BU be trivially extended and give precise time and space complexity guarantees as for without negation~\cite{DBLP:journals/toplas/LiuS09}, by checking whether a fully instantiated negative hypothesis is true in $O(1)$ time.

It is therefore natural to think that MST and DT would apply to stratified negation as well, adding demands that mimic \TD, so that 
\BU of the resulting rules has the same optimal complexity as for without negation. Unfortunately, this is not true---applying MST to stratified rules, treating demand for a negated hypotheses as demand for the hypothesis without the negation, may result in non-stratified rules~\cite{DBLP:journals/jlp/BalbinPRM91}.
We show that DT has the same problem as MST through the following example.

\begin{example}
Consider extending the running example with two additional rules:

\begin{code}
p2(x,y) \IF not p(x,y), e2(x,y).           \rm{(R3)}
p2(x,z) \IF not p(x,z), e2(x,y), p2(y,z).  \rm{(R4)}
\end{code}
\co{p2} is the transitive closure of \co{e2} whose computation does not use any pair in \co{p}.
The four rules are stratified because only \co{p2} depends negatively on \co{p}, and \co{p} does not transitively depend on \co{p2}.

Consider the query \co{p2(1,2)?}. The query is non-floundering because the only demand pattern for the negated hypotheses is $\langle\co{p},\co{bb}\rangle$.
DT yields the following rules, treating demand for a negated hypothesis as demand for the hypothesis without the negation:

\begin{code}
p(x,y) \IF d_p_bb(x,y), e(x,y).                          \rm{(R1')}   \,(DT Step 1) 
p(x,z) \IF d_p_bb(x,z), e(x,y), p(y,z).                  \rm{(R2')}   \,(DT Step 1)
p2(x,y) \IF d_p2_bb(x,y), not p(x,y), e2(x,y).           \rm{(R3')}   \,(DT Step 1)
p2(x,z) \IF d_p2_bb(x,z), not p(x,z), e2(x,y), p2(y,z).  \rm{(R4')}   \,(DT Step 1)
d_p2_bb(1,2).                                           \rm{(DF)}   \,\,(DT Step 2)
d_p_bb(y,z) \IF d_p_bb(x,z), e(x,y).                     \rm{(D1)}    (DT Step 3)
d_p_bb(x,y) \IF d_p2_bb(x,y).                            \rm{(D2)}    (DT Step 3)
d_p_bb(x,z) \IF d_p2_bb(x,z).                            \rm{(D3)}    (DT Step 3)
d_p2_bb(y,z) \IF d_p2_bb(x,z), not p(x,z), e2(x,y).      \rm{(D4)}    (DT Step 3)
\end{code}
The resulting rules are not stratified: \co{d\_p2\_bb} negatively depends on \co{p}, in (D4), and \co{p} transitively depends on \co{d\_p2\_bb} through \co{d\_p\_bb}, in (R1') and (D2).
\end{example}

How does \TD handle stratified negation without changes? The key is 
that \TD has a sequential order for when each subquery is processed, whereas 
\BU infers facts of demand predicates, i.e., the predicates with prefix \co{d\_}, without a sequential order.
We illustrate this for the extended example above.

\begin{example}
In the example above, \TD would ask \co{p2(1,2)?}, which asks \co{not p(1,2)?} using rule (R4), and assuming it returns true, \TD would find a value for \co{y}, say \co{3} and ask \co{p2(3,2)?}, then again using (R4) would ask \co{not p(3,2)?}, and so on. 

In contrast, in the resulting rules from DT, a demand for \co{p2} for the third hypothesis of (R4) depends on 
the truth value of \co{p}, whose demand depends on the demand for \co{p2} as it appears in a rule whose conclusion's predicate is \co{p2}, leading to a cyclic definition with negation. 
\end{example}

It has been shown that well-founded semantics of magic-set transformed stratified rules is two-valued~\cite{DBLP:journals/tcs/KempSS95}; it can be shown easily that this applies to DT as well. This agrees with the understanding that \TD computes well-founded semantics~\cite{DBLP:journals/jacm/ChenW96}. Then, one could  
evaluate such rules after DT by computing well-founded semantics.  However, the best bottom-up methods known for computing well-founded semantics
are quadratic time in the grounded rules~\cite{DBLP:journals/tplp/BrassDFZ01}, which implies a prohibitive $O(k^{v}\times k^v)$ bound, where $k$ is the number of constants, and $v$ is the maximum number of variables in a rule.

A recent semantics, founded semantics~\cite{DBLP:conf/lfcs/LiuS18} can compute well-founded semantics in linear time in the absence of what they call \defn{closed predicates}. However, it can be shown that the example above requires closed predicates. 

A number of serious efforts have been made to address the problem of non-stratified negation in the resulting rules from MST, as discussed in Section~\ref{sec:related}, but all of them are much more complicated than even MST, and none of them provide precise complexity guarantees, let alone matching \TD.

\mysec{Extending demand transformation and bottom-up evaluation for stratified negation}

We present simple extensions to DT and \BU so that the demand from the query is precisely captured, and answers to the query are inferred efficiently with precise time complexity guarantees.
The resulting evaluation and complexity match \TD exactly.

\mysubsec{Extended demand transformation} 

We extend DT to handle stratified negation by introducing a new predicate \co{n.p} for each predicate \co{p} that appears in a negated hypothesis and using \co{n.p} in place of \co{not p}.
Predicate \co{n.p} therefore stands for the complement of \co{p}.
We use the demand for \co{n.p} to create demand for \co{p}.  
We then extend \BU in the next subsection to infer facts of \co{n.p} by exploiting stratification. The extended DT has three steps.

\vspace{0ex}
\begin{mdframed}
\begin{enumerate}
    \item[1.] Replace each negated 
    hypothesis \co{not p(args)} with \co{n.p(args)}, where \co{n.p} is a new predicate.
    
    \item[2.] Add a rule \co{n.p(a$_1$,\rm{...},a$_k$) \IF not p(a$_1$,\rm{...},a$_k$).} for each \co{n.p} of \m{k} arguments, where a$_1$,\rm{...},a$_k$ are distinct variables. 

    \item[3.] Apply DT as before, treating demand for
    \co{not p(args)} as demand for \co{p(args)}.  

    Note that the only rules that contain such negated hypotheses are rules added in Step 2.
\end{enumerate}
\end{mdframed}
\vspace{0ex}

\begin{example} 
For the extended running example, extended DT yields the
following rules.

\begin{code} 
p(x,y) \IF d_p_bb(x,y), e(x,y).                       \rm{(R1')}   (Step 3,    DT Step 1)
p(x,z) \IF d_p_bb(x,z), e(x,y), p(y,z).               \rm{(R2')}   (Step 3,    DT Step 1)
p2(x,y) \IF d_p2_bb(x,y), n.p(x,y), e2(x,y).          \rm{(R3'')}  \,(Steps 1,3, DT Step 1)
p2(x,z) \IF d_p2_bb(x,z), n.p(x,z), e2(x,y), p2(y,z). \rm{(R4'')}  \,(Steps 1,3, DT Step 1)
n.p(x,y) \IF d_n.p_bb(x,y), not p(x,y).               \rm{(N1)}   \,\,(Steps 2,3, DT Step 1)
d_p2_bb(1,2).                                        \rm{(DF)}   \,(Step 3,    DT Step 2)
d_p_bb(y,z) \IF d_p_bb(x,z), e(x,y).                  \rm{(D1)}   \,\,(Step 3,    DT Step 3)
d_n.p_bb(x,y) \IF d_p2_bb(x,y).                       \rm{(D2')}   (Step 3,    DT Step 3)
d_n.p_bb(x,z) \IF d_p2_bb(x,z).                       \rm{(D3')}   (Step 3,    DT Step 3)
d_p2_bb(y,z) \IF d_p2_bb(x,z), n.p(x,z), e2(x,y).     \rm{(D4')}   (Steps 1,3, DT Step 3)
d_p_bb(x,z) \IF d_n.p_bb(x,z).                        \rm{(DN1)}  \,(Steps 2,3, DT Step 3)
\end{code}

Note the following observations: 
\begin{itemize}
    \item Except for (N1), 
    the resulting rules contain no negation, because \co{not p(args)} has been replaced with \co{n.p(args)} in all other rules.
    \item 
    The rule (DN1) is added in Step 3 of extended DT for the demand resulting from the second hypothesis of the rule (N1).
    \item The new predicate \co{n.p} is still in a cycle containing negation with \co{p}, so the rules are not stratified.
\end{itemize}\Vex{-4}
\end{example}

Next, we extend bottom-up evaluation to handle resulting rules that contain negation so that facts of new predicates \co{n.p}, and in turn facts of all predicates, can be inferred correctly. 

\mysubsec{Extended bottom-up evaluation}

We extend \BU so that facts for each predicate \co{n.p} are inferred and the rest of the evaluation proceeds as before with optimal time complexity. 

We make the following observation that underlies the idea of the extension.

\begin{lemma}\label{lem:firststep} Given a set of stratified rules $rs$ and a query, let $rs'$ be the resulting rules after extended DT. After \BU of $rs'$, consider predicate \co{p} in the lowest stratum in $rs$ among predicates
such that\linebreak \co{d\_n.p\_s(args)} has been inferred: if \co{p(args)} has not been inferred, then the generated rule \co{n.p(a$_1$,\rm{...},a$_k$) \IF d\_n.p\_s(a$_1$,\rm{...},a$_k$), not p(a$_1$,\rm{...},a$_k$).} by extended DT can be used to infer \co{n.p(args)}.
\end{lemma}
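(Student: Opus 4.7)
The plan is to show that when \co{p} is the lowest-stratum predicate (in $rs$) with \co{d\_n.p\_s(args)} inferred after \BU of $rs'$, the \BU-inferred status of \co{p(args)} matches its derivability in the original stratified semantics, so that the N-rule's \co{not p(args)} check is safe to evaluate. First, I would note that \co{d\_p\_s(args)} is already inferred, because extended DT adds a rule \co{d\_p\_s(x) \IF d\_n.p\_s(x)} (Step~3 applied to the \co{not p} hypothesis of the N-rule), so the assumed \co{d\_n.p\_s(args)} immediately produces \co{d\_p\_s(args)}.

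The key step, which I would argue by contradiction, is that no original-semantics derivation of \co{p(args)} (nor of any positive sub-goal reachable from it) can use a rule whose body contains a negated hypothesis. Suppose some rule $r$ in such a derivation has one; consider the \emph{first} negated hypothesis \co{not q(args')} in $r$'s body, so that all prior hypotheses \co{h$_1$,\rm{...},h$_{i-1}$} are positive. By stratification, $s_q < s_p$. By structural induction on derivation depth, the positive prefix \co{h$_1$,\rm{...},h$_{i-1}$} is all inferred by \BU of $rs'$. Then the DT Step~3 demand rule \co{d\_n.q\_s'(args') \IF d\_p\_s(args), h$_1$,\rm{...},h$_{i-1}$} fires and infers \co{d\_n.q\_s'(args')} — contradicting the choice of \co{p} as the lowest-stratum predicate with any inferred demand of the form \co{d\_n.*\_*}. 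The same argument, applied recursively to each positive sub-goal \co{r(args'')} via the demand \co{d\_r\_s''(args'')} that DT Step~3 propagates from \co{d\_p\_s(args)}, shows that the \emph{entire} derivation tree uses only rules without negated hypotheses.

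Consequently, both the original-semantics derivation of \co{p(args)} (if any) and the \BU inference in $rs'$ are confined to the negation-free fragment of $rs'$. On this fragment, the established correctness of DT together with left-optimal \BU guarantees that \co{p(args)} is inferred by \BU of $rs'$ if and only if it is derivable in the original semantics. Hence, if \co{p(args)} has not been inferred, it is not derivable, so \co{not p(args)} is true under stratified semantics, and the N-rule safely infers \co{n.p(args)}. The main obstacle is setting up the structural induction uniformly across all positive sub-goals so that the "first negated hypothesis" contradiction triggers at every level; once that is stated cleanly, stratification and DT's known correctness for positive Datalog do the rest.
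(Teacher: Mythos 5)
Your proposal is correct and takes essentially the same route as the paper's proof: obtain \co{d\_p\_s(args)} from \co{d\_n.p\_s(args)} via the Step~3 demand rule, use minimality of \co{p}'s stratum to conclude that no negated hypothesis is relevant to establishing \co{p(args)}, and then invoke the known completeness of DT plus \BU on the negation-free fragment to conclude that \co{p(args)} not being inferred means it is false. The only difference is rigor: where the paper asserts in one sentence that \co{p} transitively depends on no negated hypothesis, you justify the (demand-relative) version of that claim by the contradiction that any negated hypothesis used in a derivation of \co{p(args)} would cause a DT Step~3 rule to fire and produce a strictly lower-stratum negated demand --- a worthwhile elaboration of the step the paper leaves implicit.
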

\begin{proof}
Since \co{p} is in the lowest stratum among all predicates for which there is a negated demand, there is no negated hypothesis that \co{p} transitively depends on. If \co{d\_n.p\_s(args)} has been inferred, then because of a rule introduced in Step 3 of extended DT, \co{d\_p\_s(args)} has been inferred as well. Since \co{p} only depends on \co{d\_p\_s} and other positive hypotheses, \co{p(args)} must be inferred by \BU if true, and not inferred otherwise. Therefore if it has not been inferred, it is false. Since \co{n.p} is the complement of \co{p}, \co{n.p(args)} is true.
\end{proof}

Therefore, running \BU without rules containing negation, after \BU reaches a fixed point, new facts for a predicate \co{n.p} in the lowest stratum, for which the corresponding demand predicate \co{d\_n.p\_s} has a fact,
can be inferred. After inferring facts for \co{n.p}, \BU may be applied again, and reach a new fixed point. After the new fixed point is reached, if there are new facts for a demand predicate \co{d\_n.p\_s}, then \co{n.p} facts can be inferred again for such facts for the predicate in the lowest stratum, and so on. 
Precisely, \BU after extended DT is extended as follows, and we refer to it as \EBU after extended DT.

\begin{mdframed}
Repeat the following steps until fixed point.
\begin{enumerate}
    \item[1.] Perform \BU without using rules that infer facts of predicates of the form \co{n.p}.
    Such rules are first added in Step 2 of extended DT and then transformed in Step 3 of extended DT, and are of the form  \co{n.p(a$_1$,\rm{...},a$_k$) \IF d_n.p_s(a$_1$,\rm{...},a$_k$), not p(a$_1$,\rm{...},a$_k$).} where \co{s} is all $b$'s.
    \item[2.] Use a rule to infer a new fact \co{n.p(args)} if \co{d_n.p_s} is in the lowest stratum among predicates of the form \co{d_n.q_s} for some predicate \co{q} for which (i) \co{d_n.p_s(args)} has been inferred and (ii) \co{p(args)} has not been inferred.
\end{enumerate}
\end{mdframed}

We extend Lemma~\ref{lem:firststep} to show correctness at any iteration of the extended evaluation. 

\begin{lemma}\label{lem:allsteps} Suppose \co{d\_n.p\_s(args)} is inferred during \EBU after extended DT. If\linebreak \co{p(args)} is false for the original set of rules, then \co{n.p(args)} is inferred by \EBU after extended DT, and not inferred otherwise.
\end{lemma}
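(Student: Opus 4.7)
The plan is to prove this by strong induction on the stratum $s(p)$ of predicate \co{p} in the original stratified rule set $rs$. The base case $s(\co{p})=1$ is essentially the situation of Lemma~\ref{lem:firststep}: since \co{p} transitively depends on no negated hypothesis in $rs$, the first execution of step 1 of \EBU derives exactly those \co{p(args)} that are true in $rs$ and nothing else; step 2 then infers \co{n.p(args)} precisely when \co{p(args)} is absent, i.e., when \co{p(args)} is false in $rs$.

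For the inductive step, I would assume the claim for every predicate \co{q} with $s(\co{q}) < s(\co{p})$ and consider the outer iteration at which \co{d\_n.p\_s(args)} is inferred and at which \co{p} is in the lowest stratum eligible for step 2. By the induction hypothesis, all facts \co{n.q(args')} for $s(\co{q})<s(\co{p})$ that will ever be produced have already been produced and correctly reflect the complement of \co{q} under $rs$. Treating these \co{n.q} facts as base facts, the fragment of $rs'$ that involves only predicates at strata $\leq s(\co{p})$ together with their demand predicates becomes effectively a positive program, so step 1 of the current iteration reaches a fixed point computing exactly the set of \co{p(args)} derivable under the semantics of $rs$. Inferring \co{n.p(args)} in step 2 when \co{p(args)} is absent is therefore both sound and complete at this stratum.

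The main obstacle, and where I would spend most care, is ruling out that a later iteration could still derive a new \co{p(args)}, which would retroactively invalidate a step-2 inference of \co{n.p(args)}. The argument is that any future derivation of \co{p(args)} would require either (i) a new positively-depended fact at a stratum $\leq s(\co{p})$, which is precluded because the current \BU pass has reached a fixed point on the positive fragment above, or (ii) a new demand fact \co{d\_p\_s(args')}, which can only come from further \co{n.q} firings with $s(\co{q}) \leq s(\co{p})$; the induction hypothesis closes the case $s(\co{q})<s(\co{p})$, and stratification of $rs$ forbids any predicate at stratum $s(\co{p})$ from negatively depending on \co{p}, so no \co{n.q} firing at $s(\co{q})=s(\co{p})$ can expose new demand relevant to \co{p}. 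Termination of the outer repeat follows from finiteness of the Herbrand base, and the ``not inferred otherwise'' direction is immediate from the guard \co{not p(args)} in the rule defining \co{n.p}.
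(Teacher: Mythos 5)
Your overall plan---strong induction on strata, with the step-2 priority ordering guaranteeing that lower-stratum complements are settled before \co{d\_n.p\_s(args)} is processed---covers the same territory as the paper, which instead splits directly on whether \co{p(args)} is true or false in the original rules. The problem is in the step you yourself flag as the crux. Your claim that a new demand fact for \co{p} ``can only come from further \co{n.q} firings with the stratum of \co{q} at most that of \co{p}'' and that no such firing at the \emph{same} stratum can expose new demand for \co{p} is false. In the paper's running example, a new \co{n.p} fact (so \co{q} is \co{p} itself, same stratum) fires (D4') to produce new \co{d\_p2\_bb} facts, which produce new \co{d\_n.p\_bb} facts via (D2')/(D3'), which produce new \co{d\_p\_bb} facts via (DN1). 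Demand for \co{p} really does keep growing in later iterations as a consequence of same-stratum \co{n.p} firings routed through higher-stratum demand predicates: stratification constrains negative dependence among the \emph{original} predicates, not the flow of demand, which travels downward through the Step-3 rules. So the mechanism you rely on to rule out a later derivation of \co{p(args)} does not hold.

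The conclusion survives for a much simpler reason, which is the paper's one-line argument: every rule defining \co{p} after extended DT is the original rule with an added demand hypothesis, hence strictly more restrictive, so a fact that is false in the original rules can never be inferred by the transformed rules at any iteration, regardless of how much new demand arrives. With that soundness observation the ``retroactive invalidation'' worry disappears entirely. A second, smaller imprecision: your inductive hypothesis in the form ``all \co{n.q} facts for lower strata that will ever be produced have already been produced'' at the relevant iteration is too strong, since new lower-stratum demands (hence new \co{n.q} facts) can still arise later; what the step-2 ordering actually gives you, and all you need, is that the \co{n.q} facts reachable from the already-propagated demand \co{d\_p\_s(args)} have been settled, so that step 1 derives \co{p(args)} exactly when it is true before step 2 reaches \co{d\_n.p\_s(args)}. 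I would restructure along the paper's two cases: (false) soundness plus the guarantee that \co{d\_n.p\_s(args)} eventually becomes the lowest pending fact in step 2; (true) the Step-3 rule yields \co{d\_p\_s(args)}, and the stratum induction shows \co{p(args)} is inferred first, blocking \co{n.p(args)}.
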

\begin{proof}
(Case 1) \co{p(args)} is false for the original set of rules. Then, this fact cannot be inferred by the rules defining \co{p} in the rules obtained after extended DT since they are more restrictive than the original rules. Given that \co{d\_n.p\_s(args)} (say $f$) is inferred during \EBU after extended DT, it is guaranteed that if the rule for \co{n.p} is ever used, then \co{n.p(args)} will be inferred. The rule for \co{n.p} is guaranteed to be used since at some point in the iteration $f$ will be the fact in the lowest stratum in Step~2. 

(Case 2) \co{p(args)} is true for the original set of rules. Since \co{d\_n.p\_s(args)} is inferred during \EBU after extended DT, then \co{d\_p\_s(args)} is inferred also due to the rule generated in Step 3 of extended DT. Hence, \co{p(args)} will be inferred during \EBU after extended DT since there is a demand for it, and all of the hypotheses that define it must be in a lower stratum, and hence correctly inferred also. Since \co{p(args)} is inferred, \co{n.p(args)} cannot be inferred.
\end{proof}

\begin{theorem}[Extended \BU after extended DT matches \TD] 
A fact is inferred during \EBU of a set of stratified rules and query after extended DT iff it is inferred during \TD of the rules and query.
\end{theorem}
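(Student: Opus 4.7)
The plan is to prove both directions of the iff by induction on the stratum of predicates in the original stratified rules, using the previously established correspondence between \BU after DT and \TD for negation-free Datalog (cited earlier in the paper) as the base case. The base case covers the lowest stratum, which contains extensional predicates and intensional predicates whose defining rules have no negated hypotheses. For these, extended DT reduces to ordinary DT and \EBU reduces to ordinary \BU (since Step~1 of \EBU leaves such rules unchanged and Step~2 is vacuous for them), so the prior result applies directly.

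For the inductive step, I would fix a stratum and assume the iff holds for all predicates in strictly lower strata. Consider a predicate \co{p} in the current stratum whose rules contain negated hypotheses \co{not q(args)} with \co{q} in a lower stratum. Extended DT replaces each such literal by \co{n.q(args)} and adds the bridging rule \co{n.q(a$_1$,\rm{...},a$_k$) \IF d\_n.q\_s(a$_1$,\rm{...},a$_k$), not q(a$_1$,\rm{...},a$_k$).} By Lemma~\ref{lem:allsteps}, whenever \co{d\_n.q\_s(args)} is inferred during \EBU, \co{n.q(args)} is inferred exactly when \co{q(args)} is false, which by the inductive hypothesis matches the truth value \TD would see. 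Hence the transformed rules defining \co{p} become semantically equivalent to the original rules restricted by demand, and applying the negation-free correspondence a second time (treating each computed \co{n.q} as an already-settled predicate) yields the iff for \co{p}. The forward direction is supported by the fact that the extended rules are strictly more restrictive than the originals, and the backward direction by DT Step~3 generating exactly the demand rules that mirror \TD's subquery creation, including the new rule produced for the bridging rule's second hypothesis.

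The main obstacle is justifying the timing of \EBU's outer iteration against \TD's sequential evaluation. \EBU infers a new \co{n.p(args)} fact only when \co{p} is in the lowest outstanding stratum and \co{p(args)} has not yet been inferred; one must argue that this moment is never premature, i.e., that every positive derivation of \co{p(args)} that \TD would attempt has already been completed in the preceding \BU phase. This follows because all predicates on which \co{p} positively depends lie in strata no higher than \co{p}'s, and the non-floundering assumption guarantees that each negated subquery in \TD is fully ground, so the demand pattern for \co{n.p} is all-\co{b} and refers to a concrete argument tuple. Combined with the induction, this shows that Step~2 of \EBU fires \co{n.p(args)} precisely when \TD would return false for the corresponding negated subquery, closing the induction.
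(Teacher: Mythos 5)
Your core ingredients match the paper's: bootstrap from the known correspondence between \BU after DT and \TD for negation-free rules, and invoke Lemma~\ref{lem:allsteps} to show that each \co{n.p} fact is resolved correctly whenever it is demanded. However, the induction on strata that organizes your argument has a broken base case. You claim that for the lowest stratum ``extended DT reduces to ordinary DT and \EBU reduces to ordinary \BU, so the prior result applies directly.'' This is not so: under demand transformation, demand flows \emph{downward} across strata, and the demand facts for a low-stratum predicate can depend, directly or transitively, on \co{n.q} facts, because demand rules are built from body prefixes of higher-stratum rules. In the paper's own running example, the demand facts for \co{p} (the lowest stratum) arrive via \co{d\_n.p\_bb}, which is fed by \co{d\_p2\_bb}, which in turn is inferred by rule (D4') whose body contains \co{n.p(x,z)}. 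So which facts of the lowest-stratum predicate \co{p} get inferred depends on the resolution of \co{n.p} itself; the strata are not computationally independent, the set of \co{p}-facts is not settled before any negation is resolved, and your inductive hypothesis (``the iff holds for all predicates in strictly lower strata'') cannot be discharged stratum by stratum.

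The paper's sketch avoids this by not stratifying the argument at all: it reasons about the single interleaved computation, establishing (a) that demand facts correspond exactly to \TD subqueries, by the same left-to-right prefix argument as in the negation-free case, and (b) that whenever a negated hypothesis is demanded, Lemma~\ref{lem:allsteps} guarantees the corresponding \co{n.p} fact is inferred iff \TD's negation test would succeed. The timing concern you raise in your final paragraph is real, but it is already absorbed into the proof of Lemma~\ref{lem:allsteps} (the demanded fact eventually becomes the lowest-stratum fact selected in Step~2). What your write-up still needs is to replace the stratum induction with an induction on the order of derivation and evaluation events, so that the mutual dependence between demand facts and \co{n.p} facts is handled rather than assumed away.
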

\Vex{-0.5}
\begin{proofsketch}
This relationship is already known for \BU after DT and \TD for rules without stratified negation, therefore we only need to extend the argument to handle negated hypotheses. 

If a demand fact for a negated hypothesis is inferred during \EBU after extended DT, then all hypotheses to its left must be true, therefore a corresponding query must be evaluated during \TD, and vice versa.  So demand facts inferred correspond exactly to subqueries encountered during \TD. By Lemma~\ref{lem:allsteps}, if there is a demand for a negated hypothesis, then its corresponding fact is inferred correctly. Therefore, \EBU after extended DT corresponds exactly to \TD in terms of inferred facts.
\end{proofsketch}

\mysec{Precise complexity guarantees}

We give precise time complexity analysis using the following parameters as in~\cite{DBLP:journals/toplas/LiuS09}.

\begin{itemize}

\item \co{\#p}: number of facts 
of predicate \co{p}, called \defn{size of} \co{p}.

\item \co{\#p.i$_1$,\rm{...},i$_n$/j$_1$,\rm{...},j$_m$}: maximum number of combinations of values taken by the \co{i$_1$},\rm{...},\co{i$_n$}-th arguments of facts 
of predicate \co{p}, given any fixed value of the \co{j$_1$},\rm{...},\co{j$_m$}-th arguments.

\end{itemize}

There are two forms of rules after decomposition of rules.  When a rule has one hypothesis, it is of the form: \co{p(x$_1$,\rm{...},x$_n$) \IF q(y$_1$,\rm{...},y$_m$).} 
The number of times this rule can fire is the number of facts of \co{q}, therefore the time complexity incurred by this rule is \sincsO{q}.
In fact, we can omit the complexity of such rules, because (i) if \co{q} is an extensional predicate, 
then all its facts need to be read in, therefore the complexity is already incurred by the reading of the input, 
(ii) if \co{q} is an intensional predicate, then its size is bound by the complexity of the rules that infer its facts.

When a rule has two hypotheses, it is of the form: \co{p(args) \IF q(x$_1$,\rm{...},x$_n$), r(y$_1$,\rm{...},y$_m$).} To calculate the number of firings, we can first consider facts of \co{q} and find matching facts of \co{r} such that common variables in the two hypotheses take the same value. Therefore, only variables in the second hypothesis but not in the first can take different values for each fact of \co{q}.  Let $C_{12}$ be the set of indices $j$ in $1..m$ such that \co{y$_j$} is a common variable in the two hypotheses, then the complexity is bounded by \doubcsO{q}{r.$i \notin C_{12}/j \in C_{12}$}. 
Symmetrically, we can consider facts of \co{r} and find matching facts of \co{q}. Let $C_{21}$ be symmetrically the set of indices $j$ in $1..n$ such that \co{x$_j$} is a common variable, then the complexity is also bounded by \doubcsO{r}{q.$i \notin C_{21}/j \in C_{21}$}. Both are upper bounds, so the complexity is bounded by the minimum of the two: 
\[\ofour{q}{r.$i \notin C_{12}/j \in C_{12}$}{r}{q.$i \notin C_{21}/j \in C_{21}$}\]

\begin{example}
For the running example, rule (R1) incurs the time complexity \sincsO{e},
and rule (R2) incurs the time complexity \ofour{e}{p.2/1}{p}{e.1/2}.
\end{example}

\mysubsec{Complexity characteristics of \EBU after extended DT} 

\begin{theorem}[Extensions preserve complexity]\label{extendedcomplexity}
Extended \BU after extended DT preserves the complexity characteristics and  optimality of \BU after DT.
\end{theorem}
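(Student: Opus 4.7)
The plan is to verify that the two complexity characteristics of \BU after DT---each rule firing takes worst-case $O(1)$ time, and each firing happens exactly once so that total time equals the sum of firings plus input size---continue to hold for \EBU after extended DT, and then to invoke the matching theorem immediately above to obtain demand-driven optimality.

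First, I would inspect the rules newly introduced by extended DT. Step~2 produces only rules of the form \co{n.p(a$_1$,\rm{...},a$_k$) \IF not p(a$_1$,\rm{...},a$_k$).}, on which DT Step~1 then prepends a demand hypothesis, yielding \co{n.p(a$_1$,\rm{...},a$_k$) \IF d_n.p_s(a$_1$,\rm{...},a$_k$), not p(a$_1$,\rm{...},a$_k$).} with \co{s} consisting entirely of \co{b}'s. Under the left-optimal decomposition, a firing of this rule retrieves a fact of \co{d_n.p_s} and then tests membership of a fully-instantiated tuple in \co{p}; by the same auxiliary-index argument used earlier for stratified negation in \BU, both operations are $O(1)$. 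The remaining DT Step~3 rules for propagating demand into \co{p}, such as \co{d_p_s(args) \IF d_n.p_s(args).}, are one-hypothesis rules whose cost is subsumed by the standard accounting described above. Thus every firing of every newly introduced rule is $O(1)$.

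Second, I would argue that the outer iteration of \EBU does not multiply firings, which I expect to be the main obstacle. Within each inner \BU pass the usual worklist discipline already fires each rule-instance at most once. Between passes, Step~2 of \EBU introduces \co{n.p(args)} only when \co{p(args)} is absent in the currently lowest unprocessed stratum, and by the stratification argument used in Lemma~\ref{lem:allsteps} no subsequent pass can produce \co{p(args)}; hence each \co{n.p} fact is inferred exactly once. A restarted \BU pass performs only those firings triggered by the fresh \co{n.p} facts, and each such firing is counted exactly once by the complexity formulas applied to the final transformed rule set. Equivalently, the cumulative trace of \EBU coincides with the trace of a hypothetical single \BU run in which every correct \co{n.p} fact had been supplied in advance, so the sum-of-firings-plus-input-size bound is preserved.

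Finally, demand-driven optimality follows directly from the matching theorem above: the facts inferred by \EBU after extended DT are exactly those inferred by \TD of the original rules and query, so no superfluous facts are produced.
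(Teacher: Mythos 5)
Your proposal is correct, but it is organized quite differently from the paper's proof, which is a three-sentence argument observing that the only new operation in \EBU is the selection in Step~2 of a demand fact \co{d\_n.p\_s(args)} for the predicate \co{n.p} in the lowest stratum, and that this selection costs constant time in data complexity because one can scan the strata from lowest upwards (the number of strata being a program-level constant). That selection cost is the one point you do not explicitly bound: you discuss \emph{when} Step~2 introduces an \co{n.p} fact but not the cost of deciding \emph{which} pending demand fact is in the lowest eligible stratum, which is precisely the paper's sole concern. Conversely, you make explicit several things the paper leaves implicit and which a careful reader might want spelled out: that the new rules \co{n.p(a$_1$,\rm{...},a$_k$) \IF d\_n.p\_s(a$_1$,\rm{...},a$_k$), not p(a$_1$,\rm{...},a$_k$).} fire in $O(1)$ because the negated hypothesis is fully instantiated; that each \co{n.p} fact is inferred exactly once because, by the argument of Lemma~\ref{lem:allsteps}, no later pass can produce the corresponding \co{p} fact; and that the restarted \BU passes are incremental, so the cumulative trace equals that of a single \BU run with the correct \co{n.p} facts supplied in advance, preserving the sum-of-firings accounting. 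Your closing appeal to the matching theorem for demand-driven optimality is also sound. In short, your argument subsumes the paper's except for the one sentence about the constant-time lowest-stratum lookup; adding that observation would make it complete and strictly more detailed than the published proof.
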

\begin{proof}
The only difference in \EBU is Step 2 where a demand fact for predicates of the form \co{n.p} in the lowest stratum is found. Such a fact can be found in constant time in data complexity, by searching for demand facts starting from the lowest stratum upwards. Therefore, the extension preserves the complexity characteristics and the optimality of \BU after DT.
\end{proof}

The precise time and space complexity of \TD, and its relationship with \BU after MST and DT, were open until~\cite{DBLP:conf/ppdp/TekleL10}. The results in~\cite{DBLP:conf/ppdp/TekleL10} include the following theorems---they establish that the time complexity of \TD and that of \BU after DT are equal for rules with at most two hypotheses each, and for general Datalog rules,
\BU after DT is equal to or faster than \TD.

\begin{theorem}[\BU after DT equals \TD on decomposed rules~\cite{DBLP:conf/ppdp/TekleL10}]\Vex{-1}
  Let $P$ be a set of Datalog rules and a query, 
  such that the rules have no singleton variables, 
  and there are no more than two hypotheses per rule. 
  Let $P'$ be the set of rules and fact after demand transformation of $P$. 
  Let $T_{td}$ be the asymptotic time complexity of \TD of $P$, 
  and $T_{bu}$ be the asymptotic time complexity of \BU of $P'$.  
  Then, $T_{bu} = T_{td}$.
\end{theorem}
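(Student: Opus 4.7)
The plan is to prove $T_{bu} = T_{td}$ by establishing a two-way match between the two evaluations: first, that the subqueries issued during \TD of $P$ are in bijection with the demand facts inferred during \BU of $P'$; and second, that at each matching pair the per-firing work contributed is asymptotically identical.

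For the bijection, I would induct on the depth of the \TD call tree. The base case is the initial query, which corresponds to the seed demand fact produced by Step~2 of DT. For the inductive step, whenever \TD issues a subquery for a hypothesis $h_i$ of some rule, it does so only after the preceding hypotheses $h_1,\ldots,h_{i-1}$ have yielded answers consistent with the current bindings. Step~3 of DT generates exactly one demand rule per such (rule, $h_i$) pair whose body consists of the parent demand fact followed by $h_1,\ldots,h_{i-1}$, so this demand rule fires in \BU precisely when \TD would have recursed; the converse direction is symmetric, since every \BU firing of a demand rule is justified by a chain of facts that \TD would enumerate in order.

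For the per-firing match, I would case-split on the number of hypotheses in the rule. For a rule with a single hypothesis $q$, \BU fires once per fact of $q$ that matches the current demand, and \TD looks up the same set of $q$-answers once each; both contribute \sincsO{q}, which as the excerpt observes is absorbed by input reading or by the rules that derive $q$. For a rule with two hypotheses $q$ and $r$, \BU fires a number of times bounded by the minimum of \#q times the per-$q$-fact $r$-completions (on common variables) and its symmetric counterpart, while \TD, committed to left-to-right evaluation, enumerates $q$-answers and issues an $r$-subquery for each, producing exactly the first expression in the minimum. The no-singleton-variable hypothesis is essential here: it prevents either evaluation from enumerating values that touch no other hypothesis and no conclusion, so the counts agree asymptotically. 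That each firing takes $O(1)$ in \BU is given by the optimal method, and that each tabled subquery answer lookup takes $O(1)$ in \TD is standard for variant tabling.

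The hard part will be reconciling the left-to-right commitment of \TD with the two-sided minimum in the \BU formula. In principle, the minimum could allow \BU to do strictly less work than \TD; I would show that the DT-propagated demand predicate \co{d\_q\_s} restricts $q$ in \BU to precisely the bindings that \TD explores, so the first expression of the minimum is asymptotically no larger than the second, and \BU genuinely matches \TD's left-to-right cost rather than being a lucky improvement. Combining the bijection with the per-firing match and summing over all rules and subqueries then yields $T_{bu} = T_{td}$.
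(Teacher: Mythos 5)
First, a point of comparison: the paper itself gives no proof of this theorem --- it is imported verbatim from the cited prior work on demand transformation, so your proposal can only be judged against the known proof strategy rather than against a proof printed here. That said, your overall architecture is the right one and is essentially that standard argument: a correspondence between the tabled subqueries of \TD and the demand facts inferred by \BU after DT (seeded by the Step-2 demand fact, propagated by the Step-3 demand rules, with variant tabling matching the fact that each ground demand fact is inferred exactly once), together with a per-combination cost match; and you correctly identify that the no-singleton-variable and two-hypothesis assumptions are what rule out the cases in which \BU can be strictly faster.

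The one step that would fail as written is your resolution of the ``two-sided minimum.'' You propose to show that the first expression of the minimum (the left-to-right bound) is asymptotically no larger than the second; that inequality is false in general. Take \co{q} with two facts whose join-variable values are distinct, and \co{r} with $n$ facts all sharing one of those two values: the left-to-right bound is $2n$ while the symmetric bound is $n$. The correct observation is that both expressions are merely static upper bounds on the same order-independent quantity, namely the exact number of pairs of facts of \co{q} and \co{r} that agree on the common variables. \BU performs exactly one constant-time firing per such pair, and \TD with variant tabling performs exactly one constant-time answer combination per such pair when it enumerates \co{q}-answers left to right and looks up answers to the tabled \co{r}-subquery. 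So the equality holds at the level of exact firing counts, and no comparison between the two bounds is needed. With that repair --- plus the routine bookkeeping that demand-rule firings in \BU correspond to subquery calls in \TD, including calls that return no answers --- your proposal goes through.
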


\Vex{-0.75}
\begin{theorem}[\BU after DT beats \TD~\cite{DBLP:conf/ppdp/TekleL10}]
  Let $P$ be a set of Datalog rules and a query.  
  Let $P'$ be the set of rules and fact after demand transformation of $P$.
  Let $T_{td}$ be the asymptotic time complexity of \TD of $P$, 
  and $T_{bu}$ be the asymptotic time complexity of \BU of $P'$. 
  Then, $T_{bu} \le T_{td}$.
\end{theorem}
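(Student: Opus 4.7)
The plan is to reduce the general case to the equality result of the previous theorem by preprocessing $P$ into a program $P''$ that has no singleton variables and at most two hypotheses per rule, then relating the complexities on the two sides.

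First I would eliminate singleton variables. A variable that appears in only one hypothesis of a rule body contributes multiplicatively to the number of matches that \TD enumerates left-to-right but contributes nothing to the bindings flowing to the rest of the rule. Replacing the offending hypothesis \co{q(...,x,...)} with a projection \co{q'(...)} can only decrease the \TD firing count, and \BU after DT performs exactly this projection up front, so its complexity is already that of the projected program.

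Second I would apply the left-optimal decomposition, splitting each rule with more than two hypotheses into rules of two hypotheses each via a fresh auxiliary predicate per split. The resulting program $P''$ has the property that its auxiliary predicates each appear in exactly one decomposed rule, so there is no cross-rule sharing to account for. The key observation justifying $T_{td}(P'') \le T_{td}(P)$ is that the size of each auxiliary predicate is bounded above by the number of partial matches that \TD of $P$ already enumerates when processing the corresponding prefix of hypotheses left-to-right, and those partial matches are counted by the asymptotic cost of \TD on $P$.

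Finally, applying the previous theorem to the singleton-free two-hypothesis program $P''$ yields $T_{bu}(P'') = T_{td}(P'')$, and since \BU after DT on $P$ is by construction \BU after DT on $P''$, the chain $T_{bu}(P) = T_{bu}(P'') = T_{td}(P'') \le T_{td}(P)$ gives the theorem. The hard part will be the bound $T_{td}(P'') \le T_{td}(P)$ in step two: one must carefully inject each firing of a decomposed rule into distinct work already performed by \TD on $P$, and verify that iterating the decomposition across all splits of a single rule does not accumulate asymptotic overhead beyond what \TD of the original pays when scanning its hypotheses left-to-right.
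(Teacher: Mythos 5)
First, a point of comparison: the paper does not prove this theorem at all --- it is imported verbatim from the cited prior work and stated without proof --- so there is no in-paper argument to measure yours against. Judged on its own terms, your reduction to the preceding equality theorem is a sensible plan and you correctly locate the crux, but two load-bearing steps are asserted rather than established, and one of them is not quite right as stated.

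The claim that ``\BU after DT on $P$ is by construction \BU after DT on $P''$'' does not hold literally. In the pipeline the paper describes, DT is applied to the \emph{undecomposed} rules, and the split into two-hypothesis rules happens afterwards, inside \BU; your $P''$ is decomposed first and then demand-transformed. These two orders do not yield the same program: DT applied to $P''$ introduces demand predicates and demand rules for the auxiliary predicates created by the decomposition, which never exist in the DT-then-decompose pipeline. You need a separate (plausible but nontrivial) argument that this extra demand machinery is asymptotically harmless, i.e.\ that $T_{bu}(P) = T_{bu}(P'')$ up to constants. Second, the inequality $T_{td}(P'') \le T_{td}(P)$, which you yourself flag as the hard part, is where essentially all the content of the theorem lives: the overall inequality can be strict precisely because tabling the auxiliary predicates deduplicates partial matches (after projecting away variables not needed by later hypotheses) that \TD of $P$ re-enumerates, and your proposal stops exactly where the injection of each decomposed-rule firing into distinct work of \TD on $P$ would have to be constructed, including the accounting for variant subqueries of the auxiliaries under tabling. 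Finally, the singleton-elimination step needs the caveat that only singleton variables not occurring in the conclusion can be projected away. As it stands, this is a credible outline of a proof, not a proof.
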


By Theorem~\ref{extendedcomplexity}, and the fact that \TD behaves identically for stratified negation, these theorems can be extended as follows.

\begin{corollary}[Extended \BU after extended DT 
equals \TD on decomposed rules]\Vex{-1}~\linebreak
  Let $P$ be a set of stratified rules and a non-floundering query, such that the rules have no singleton variables, and there are no more than two hypotheses per rule. Let $P'$ be the set of rules and fact after extended demand transformation of $P$. Let $T_{td}$ be the asymptotic time complexity of \TD of $P$, and $T_{bu}$ be the asymptotic time complexity of \EBU of $P'$. Then, $T_{bu} = T_{td}$.
\end{corollary}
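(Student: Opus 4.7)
The plan is to deduce the corollary from the three ingredients already assembled: the previously known equality of \TD and \BU after DT on decomposed rules (no negation), Theorem~\ref{extendedcomplexity} which states that the extensions to both DT and \BU preserve the complexity characteristics of plain DT plus \BU, and the observation that \TD of a non-floundering stratified program costs the same as \TD of its positive skeleton because every negated subquery is fully bound and can be resolved by an $O(1)$ lookup in an already-computed lower stratum. With these three pieces, the argument is essentially a composition, and the proof should spell out the two directions $T_{bu}\le T_{td}$ and $T_{bu}\ge T_{td}$ separately.

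First I would set up the correspondence between firings. Let $P^+$ denote $P$ with every \co{not p(args)} hypothesis treated as a positive \co{p(args)} hypothesis, and let $P'^+$ denote $P'$ with every \co{not p(args)} in the rules (N1) produced in Step~2 of extended DT treated positively. The rules in $P'$ that do not involve \co{n.p} are literally the rules produced by ordinary DT on $P^+$, with ``demand for \co{not p(args)}'' identified with ``demand for \co{p(args)}'' exactly as required by Step~3 of extended DT. I would then argue that the firings of those rules during \EBU of $P'$ are in bijection with the firings of the corresponding rules during plain \BU of $P'^+$, because Step~2 of \EBU merely fills in facts of \co{n.p} that are then consumed positively, and Theorem~\ref{extendedcomplexity} gives that this additional bookkeeping is $O(1)$ per negated demand fact, hence subsumed by the cost of inferring that demand fact.

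Next I would transfer this to \TD. Since the query is non-floundering, every negated subquery arising during \TD of $P$ has all arguments bound; the truth of such a subquery is decided in $O(1)$ by a lookup against the facts for \co{p} already produced at a lower stratum. Consequently, the asymptotic cost of \TD of $P$ agrees with the asymptotic cost of \TD of $P^+$ on the same input, where negated literals are replaced by their positive counterparts at each bound call. By the decomposed-rules case of the earlier theorem applied to $P^+$, this cost equals the cost of \BU of the DT-transform of $P^+$, which by the correspondence in the previous paragraph equals the cost of \EBU of $P'$.

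The main obstacle I expect is making the bijection between firings watertight across the three artifacts (\TD of $P$, plain \BU/DT of $P^+$, and \EBU/extended DT of $P'$). Specifically, I need to verify that (i) every \co{n.p(args)} derivation step in \EBU is chargeable to the already-counted cost of inferring the matching \co{d\_n.p\_s(args)}, because by Lemma~\ref{lem:allsteps} each such demand fact triggers at most one attempted $O(1)$ \co{n.p} inference per stratum sweep, and (ii) the iterative stratum-by-stratum structure of Step~2 of \EBU does not introduce re-scanning that would inflate the complexity; this is precisely what Theorem~\ref{extendedcomplexity} guarantees in data complexity, so the corollary follows by combining the two equalities.
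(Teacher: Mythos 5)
Your proposal is correct and follows essentially the same route as the paper: the paper derives this corollary in one line by combining Theorem~\ref{extendedcomplexity} (extensions preserve the complexity of \BU after DT) with the prior theorem that \BU after DT equals \TD on decomposed rules, plus the observation that \TD behaves identically (up to $O(1)$ bound-negation checks) for stratified negation. Your write-up just spells out in more detail the firing correspondences that the paper leaves implicit.
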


\Vex{-0.75}
\begin{corollary}[Extended \BU after extended DT beats \TD]
  Let $P$ be a set of stratified rules and a non-floundering query.  Let $P'$ be the set
  of rules and fact after extended demand transformation of $P$.
  Let $T_{td}$ be the asymptotic time complexity of \TD of $P$, and $T_{bu}$ be the asymptotic time complexity of \EBU
  of $P'$. Then, $T_{bu} \le T_{td}$.
\end{corollary}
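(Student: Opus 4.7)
The plan is to reduce this corollary to the corresponding known result for plain Datalog (the theorem ``\BU after DT beats \TD'' cited from~\cite{DBLP:conf/ppdp/TekleL10}) by showing that (a) the negation-handling additions on the \EBU side incur no asymptotic overhead, and (b) the negation-handling additions on the \TD side incur no asymptotic overhead either. Once both sides are shown to have complexities equal to their negation-free analogues applied to the ``positive skeletons'' of the rules, the inequality $T_{bu} \le T_{td}$ transfers directly from the earlier theorem.

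First I would make precise the positive skeleton: given the stratified program $P$, let $P_{\text{pos}}$ be the program obtained by deleting every negated literal from each rule body, and let $P'_{\text{pos}}$ be obtained by applying ordinary DT to $P_{\text{pos}}$. The earlier theorem applied to $(P_{\text{pos}}, P'_{\text{pos}})$ gives $T_{bu}(P'_{\text{pos}}) \le T_{td}(P_{\text{pos}})$. Next I would argue that \TD on $P$ and \TD on $P_{\text{pos}}$ are asymptotically identical: by the standard analysis of non-floundering stratified queries, each encounter of a negated hypothesis \co{not p(args)} during \TD occurs with all arguments bound, and its evaluation is a constant-time lookup into the already-computed table for \co{p} at a lower stratum. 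Summed over all subqueries this adds at most $O(1)$ per rule firing, which is absorbed into the existing asymptotic bound.

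Then I would argue the symmetric statement for \EBU after extended DT versus \BU after DT. The extended DT adds, for each negated predicate \co{p}, one extra predicate \co{n.p}, one rule defining it, and the demand rule (DN1)-style propagation from \co{d\_n.p\_s} to \co{d\_p\_s}. The only additional work performed during \EBU that is not already counted by \BU on $P'_{\text{pos}}$ is (i) the firing of these \co{n.p} rules, each of which is an $O(1)$ lookup against \co{p} for a fully bound argument tuple, and (ii) the Step~2 bookkeeping to find the lowest-stratum demand predicate \co{d\_n.p\_s}, which is $O(1)$ in data complexity as established in Theorem~\ref{extendedcomplexity}. Hence $T_{bu}(P') = T_{bu}(P'_{\text{pos}})$ asymptotically, and analogously $T_{td}(P) = T_{td}(P_{\text{pos}})$. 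Chaining these with the inequality for the positive skeletons yields $T_{bu}(P') \le T_{td}(P)$.

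The main obstacle is making the ``no asymptotic overhead'' claims fully rigorous in the presence of the negation-breaking new predicates \co{n.p}, since these predicates sit in a non-stratified cycle with \co{p} in $P'$. The delicate point is that even though $P'$ is not stratified as a rule set, \EBU evaluates \co{n.p}-rules only after the relevant stratum of $P$ has reached a fixed point under the positive rules, so every such rule firing is effectively a deferred constant-time lookup rather than a source of additional iterations. Spelling this out requires pairing each \co{n.p}-firing in $P'$ with the corresponding \co{not p(args)} check done by \TD on $P$, and verifying that the two incur matching $O(1)$ costs; this was already the content of Lemma~\ref{lem:allsteps} and Theorem~\ref{extendedcomplexity}, so the corollary follows without new machinery.
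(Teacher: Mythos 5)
Your reduction goes through an intermediate program $P_{\text{pos}}$ obtained by deleting negated literals, and this is where the argument breaks. The two equalities you need, $T_{td}(P) = T_{td}(P_{\text{pos}})$ and $T_{bu}(P') = T_{bu}(P'_{\text{pos}})$, are false in general, because a negated hypothesis does not merely cost an $O(1)$ check: it also prunes everything to its right. In \TD of $P$, when \co{not p(args)} fails, the remaining hypotheses of that rule are never turned into subqueries, whereas in $P_{\text{pos}}$ they always are; so $T_{td}(P_{\text{pos}}) \ge T_{td}(P)$, possibly with an arbitrarily large asymptotic gap (in the paper's extended running example, if \co{p(1,2)} holds then \TD of $P$ answers \co{p2(1,2)?} after a single failed negation check, while \TD of $P_{\text{pos}}$ computes a full reachability set). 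Your chain therefore only yields $T_{bu}(P') \le T_{td}(P_{\text{pos}})$, which is weaker than the corollary's claim $T_{bu}(P') \le T_{td}(P)$; the final link points in the wrong direction. The same pruning issue affects the \BU side (the \co{n.p} hypotheses guard the demand rules, e.g.\ (D4') in the paper, so $P'$ fires fewer rules than $P'_{\text{pos}}$), but there the resulting inequality $T_{bu}(P') \le T_{bu}(P'_{\text{pos}})$ happens to be harmless.

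The paper does not introduce a negation-stripped intermediate at all. It compares \EBU of $P'$ against \TD of $P$ directly: Theorem~\ref{extendedcomplexity} shows that the extensions preserve the per-firing cost and optimality of \BU after DT (each \co{n.p} firing and the Step~2 selection of the lowest-stratum demand predicate are $O(1)$), and Lemma~\ref{lem:allsteps} together with the matching theorem shows that the demand facts and inferred facts of \EBU on $P'$ correspond exactly to the subqueries and answers of \TD on $P$ itself, negation checks included. The inequality of the original theorem then transfers because both sides of the comparison retain the pruning effect of negation. The parts of your argument that are sound (the $O(1)$ cost of each negation check and of the Step~2 bookkeeping) are exactly the content of Theorem~\ref{extendedcomplexity}; to repair the proof you should drop the skeleton $P_{\text{pos}}$ and instead use the direct firing-to-subquery correspondence between the stratified programs themselves.
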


\begin{example}\label{ex:complexity}
For the extended running example after extended DT, we calculate the time complexity by considering the most complex rule, copied below.

\begin{code}
p2(x,z) \IF d_p2_bb(x,z), n.p(x,z), e2(x,y), p2(y,z).  \rm{(R4'')}
\end{code}
Extended \BU decomposes this rule into three rules:

\begin{code}
i1(x,z) \IF d_p2_bb(x,z), n.p(x,z).  \rm{(R4''1)}
i2(x,y,z) \IF i1(x,z), e2(x,y).      \rm{(R4''2)}
p2(x,z) \IF i2(x,y,z), p2(y,z).      \rm{(R4''3)}
\end{code}
Rules (R4''1) and (R4''3) have a hypothesis containing all variables occurring in the rule, so they do not contribute any extra complexity. The time complexity incurred by (R4''2) is
\[\ofour{i1}{e2.2/1}{e2}{i1.1/2}.\] 
Note that this complexity is precise, and can be very small because \co{i1} is the intersection of \co{d\_p2\_bb} and \co{n.p}. 

The complexity benefit of \EBU after extended DT is even more apparent when contrasted to evaluating the resulting rules bottom-up for well-founded semantics whose worst-case time complexity is $O(k^6)$ where $k$ is the number of constants, whereas the worst-case time complexity using our method is $O(k^3)$.
\end{example}

\mysec{Applications and experiments}

\Vex{3}
\mysubsec{Experiments on the extended running example} 

To confirm the effectiveness of our method, we implemented our method to perform extended DT, and then created a Python program that performs \EBU on the resulting rules.
We compare our results with state-of-the-art systems:
clingo~\cite{DBLP:journals/aicom/GebserKKOSS11}, an ASP solver, DLV2, the latest version of a deductive database with ASP and dynamic magic sets~\cite{alviano2012magic, alviano2017asp}, and XSB~\cite{DBLP:journals/jacm/ChenW96}, the dominant
engine with 
tabling and well-founded semantics. 

We show experiments on the extended running example.
For clingo, which does not support answering queries, we give it the transformed set of rules and facts.
For XSB, we used a special construct, \co{load\_dynca} to turn off indexing when loading facts, improving the performance more than 10 times.
We use facts generated for predicates \co{e} and \co{e2}, by varying the number of nodes and edges. The running times
were captured on an Intel Core i5 2.8 GHz with 8 GB RAM, with PyPy 7.1.0 to run Python programs, clingo 5.3.0, DLV2, and XSB 3.8, 
including the time for reading facts and producing output, 
and are averaged over 5 runs. The results are shown in Table~\ref{table:exp}, where K stands for 1000.
\Vex{-2}
\begin{table}[h]\label{table:exp}
  \small
  \caption{Running times in secs comparing our extended \BU (EBU), clingo, DLV2, XSB}
  \Hex{4}
    \begin{tabular}{cr|r|rr|rr|rr}
      \hline
      \# Nodes& \# Edges& EBU & clingo & clingo\m{/}EBU & DLV2 & DLV2/EBU & XSB & XSB\m{/}EBU \\\hline
      1K & 200K &  2.392 & 5.549 & 2.31 & 1.336 & 0.56 & 1.201 & 0.50  \\
      1K & 400K &  4.597 & 9.845 & 2.14 & 2.599 & 0.57 & 2.471 & 0.54 \\
      1K & 600K &  6.898 & 13.783 & 1.99 & 4.972 & 0.72 & 3.606 & 0.52  \\ 
      2K & 600K &  7.871 & 18.129 & 2.30 & 5.573 & 0.71 & 3.676 & 0.47   \\
      2K & 800K &  10.999 & 24.059 & 2.18 & 8.012 & 0.73  & 4.731 & 0.43 \\
      2K & 1000K & 13.978 & 29.669 & 2.12 & 7.260 & 0.52 & 6.008 & 0.43\\
     \hline
    \end{tabular}
\end{table}
\Vex{-2}

One can see that the running times for \EBU are tightly coupled with the number of edges as shown in the formula in Example~\ref{ex:complexity}.
It is difficult to speculate on the time and space complexities for clingo and DLV2 because precise complexity analysis as shown for our method is not known. The strength of our method can be seen in contrast to these mature systems even 
though our Python program is not optimized for constant factors, and Python is known to be much slower than C and C++ used to implement XSB, DLV2, and Clingo.

\mysubsec{Balbin's not-reach-in-reach2 problem}

Balbin et al.~\cite[Example 12]{DBLP:journals/jlp/BalbinPRM91} give the following rules 
to show the challenge of non-stratified negation resulting from applying MST on stratified rules
(some predicates/variables have been renamed for readability):

\begin{code}
r(x) \IF s(x).                       \rm{(B1)}
r(x) \IF e(x,y), r(y).               \rm{(B2)}
r2(x) \IF s2(x).                     \rm{(B3)}
r2(x) \IF not r(x), e2(x,y), r2(y).  \rm{(B4)}
\end{code}
along with the query \co{r2(1)?}.  

Using our method, the resulting rules from extended DT are:

\begin{code}
r(x) \IF d_r_b(x), s(x).                      \rm{(B1')}
r(x) \IF d_r_b(x), e(x,y), r(y).              \rm{(B2')}
r2(x) \IF d_r2_b(x), s2(x).                   \rm{(B3')}
r2(x) \IF d_r2_b(x), n.r(x), e2(x,y), r2(y).  \rm{(B4')}
n.r(x) \IF d_n.r_b(x), not r(x).              \rm{(N1)}
d_r2_b(1).                                   \rm{(DF)}
d_r_b(y) \IF d_r_b(x), e(x,y).                \rm{(D1)}
d_n.r_b(y) \IF d_r2_b(x).                     \rm{(D2)}
d_r2_b(y) \IF d_r2_b(x), n.r(x), e2(x,y).     \rm{(D3)}
d_r_b(x) \IF d_n.r_b(x).                      \rm{(D4)}
\end{code}

These rules can be evaluated in linear time in the input size, because after decomposition into rules with two hypotheses, each rule has at least one hypothesis with extensional predicate \co{s}, \co{e}, \co{s2}, \co{e2}
that contains all variables in the rule.  Contrast this to bottom-up evaluation for well-founded semantics, which would have only the bound of $O(k^4)$, where $k$ is the number of constants, since there is a rule with two variables. 

\mysubsec{Meskes-Noack's no-extra-joins-in-path problem}

Meskes and Noack~\cite[Example 1]{DBLP:journals/ipl/MeskesN93} give the following rules
to show that a negated predicate does not need to be in a positive cycle for non-stratified rules to occur after MST
(some predicates/variables have been renamed for simplicity): 

\begin{code}
s(x) \IF q(x,z), r(z,y).              \rm{(M1)}
p(x,y) \IF e(x,y), not s(y).          \rm{(M2)}
p(x,z) \IF e(x,y), p(y,z), not s(y).  \rm{(M3)}
\end{code}
along with the query \co{p(1,y)?}. It can be seen that the only negated predicate is \co{s}, and it is simply defined using extensional predicates, but after DT, non-stratified rules are obtained. 

Extended DT yields the following rules:

\begin{code}
s(x) \IF d_s_b(x), q(x,z), r(z,y).             \rm{(M1')}
p(x,y) \IF d_p_bf(x), e(x,y), n.s(y).          \rm{(M2')}
p(x,z) \IF d_p_bf(x), e(x,y), p(y,z), n.s(y).  \rm{(M3')}
n.s(x) \IF d_n.s_b(x), not s(x).               \rm{(N1)}
d_p_bf(1).                                    \rm{(DF)}
d_n.s_b(y) \IF e(x,y).                         \rm{(D1)}
d_p_bf(y) \IF d_p_bf(x), e(x,y).               \rm{(D2)}
d_n.s_b(y) \IF d_p_bf(x), e(x,y), p(y,z).      \rm{(D3)}
d_s_b(x) \IF d_n.s_b(x).                       \rm{(D4)}
\end{code}
(M3') after decomposition gives the highest time complexity:
$\ofour{i}{p.2/1}{p}{i.1/2}$
where \co{i} is an intermediate predicate for the leftmost two hypothesis, and is a subset of \co{e} restricted by \co{d\_p\_bf} for the first argument. 
Like our original running example, 
the time complexity is bounded by $O(k^3)$ using \EBU after extended DT
versus $O(k^6)$ for bottom-up evaluation for well-founded semantics.

\mysec{Related work and conclusion}\label{sec:related}

Datalog has been extensively studied, especially methods for efficient evaluation~\cite{Maier:2018:DCH:3191315.3191317,DBLP:books/aw/AbiteboulHV95}.

Top-down evaluation with variant tabling was introduced in the 1980s~\cite{DBLP:conf/iclp/TamakiS86} and has been widely studied as well. It has been implemented in top-down evaluation engines such as XSB~\cite{DBLP:journals/jacm/ChenW96} and SWI~\cite{DBLP:journals/tplp/WielemakerSTL12}.
Optimal bottom-up evaluation was given in~\cite{DBLP:journals/toplas/LiuS09}. Incremental maintenance of Datalog with stratified negation was studied in~\cite{motik2019maintenance}.

Transformations for demand-driven bottom-up evaluation have been studied, in many forms of the well-known magic-set transformation (MST)~\cite{DBLP:conf/pods/BancilhonMSU86}, as well as the more recent demand transformation (DT)~\cite{DBLP:conf/ppdp/TekleL10} that our work extends. 

A method for precise calculation of time and space complexities of top-down evaluation with variant tabling for Datalog, as well as theorems establishing the precise relationship between top-down and bottom-up evaluations are given in~\cite{DBLP:conf/ppdp/TekleL10,DBLP:conf/sigmod/TekleL11}.

Although top-down evaluation handles stratified negation without changes, an analogous demand-driven bottom-up evaluation has been challenging to obtain. 

Balbin et al.~\cite{DBLP:journals/jlp/BalbinPRM91} introduce a transformation to a set of magic-set transformed rules, that works for an unidentified class of rules, and does not apply to all rules with stratified negation. They then introduce a new method for bottom-up evaluation with no complexity characterization. 

Meskes and Noack~\cite{DBLP:journals/ipl/MeskesN93} extend generalized supplementary MST~\cite{DBLP:journals/jlp/BeeriR91} by ignoring negated hypotheses for rules that infer demand facts to keep the rules stratified, but the method overestimates the demand and hence does not correspond to top-down evaluation.

Ross~\cite{DBLP:journals/jacm/Ross94} presents a new transformation whose output is not Datalog, but a higher-order logic, complicating their evaluation and giving no complexity analysis. 

Alviano et al.~\cite{alviano2012magic} extends magic sets for disjunctive Datalog, but its bottom-up computation requires grounding and stable model search, not giving precise complexity guarantees as we do.

It has been shown that the well-founded semantics of rules obtained by MST for stratified rules is two-valued~\cite{DBLP:journals/tcs/KempSS95}, but the best known complexity for bottom-up evaluation for well-founded semantics is a highly prohibitive quadratic time in the size of the grounded rules.

Our work extends demand transformation and optimal bottom-up evaluation and gives precise complexity guarantees for efficient demand-driven bottom-up evaluation of stratified rules, preserving optimality, and relationships with top-down evaluation. 

Future work includes methods for efficient bottom-up evaluation and demand-driven evaluation for non-stratified negation with precise complexity guarantees.

\begin{acks}
This work was supported in part by NSF under grants CCF-1414078 and IIS-1447549.
\end{acks}
\Vex{-3}

\renewcommand\c[1]{\oldc{#1}} 

\nocite{*}
\bibliographystyle{eptcs}
\bibliography{biblio}

\end{document}